\newcommand{\la}{\langle}
\newcommand{\ra}{\rangle}
\newcommand{\changed}[1]{\textcolor{black}{#1}}
\begin{document}

\markboth{Benjamin Russell and Susan Stepney}
{Instructions for Typing Manuscripts $($Paper's Title\/$)$}

%
\catchline{}{}{}{}{}
%

\title{The Geometry of Speed Limiting Resources in Physical Models of Computation}

\author{Benjamin Russell\footnote{Princeton, NJ 08540, USA}}

\address{Department of Chemistry, Princeton University\\
Princeton, NJ 08540, USA\\
\email{br6@princeton.edu}}

\author{Susan Stepney}

\address{Department of Computer Science, University of York\\
York YO10 5DD, UK\\
susan.stepney@york.ac.uk}

\maketitle

\begin{history}
\received{(Day Month Year)}
\accepted{(Day Month Year)}
\comby{(xxxxxxxxxx)}
\end{history}

\begin{abstract}
We study the maximum speed of quantum computation and how it is affected by limitations on physical resources.
We show how the resulting concepts generalize to a broader class of physical models of computation 
\changed{within dynamical systems} and introduce a specific algebraic structure representing these speed limits.
We derive a family of quantum speed limit results in resource-constrained quantum systems with pure states and a finite dimensional state space, by using a geometric method based on right invariant action functionals on $SU(N)$.
We show that when the action functional is bi-invariant, the \changed{minimum time} for implementing \changed{any} quantum gate using a potentially time-dependent Hamiltonian is equal to the \changed{minimum} time when using a constant Hamiltonian, \changed{thus constant Hamiltonians are time optimal for these constraints}.
We give an explicit formula for the time in these cases, in terms of the resource constraint.
We show how our method produces a rich family of speed limit results, of which the generalized Margolus--Levitin theorem and the Mandelstam--Tamm inequality are special cases.
We discuss the broader context of geometric approaches to speed limits in physical computation, including the way geometric approaches to quantum speed limits are a model for physical speed limits to computation arising from a limited resource.
\end{abstract}

\keywords{Quantum Speed Limit; Quantum Computation}
\section{Introduction}

As various models of physical computation are explored,
there is interest in determining the ultimate physical limits of such systems  \cite{Lloyd} wherein limits from quantum mechanics, relativity and thermodynamics are presented.
Speed of computation is one important such limit.
There is interest in the speed limit to physical processes \cite{SLPP}, speed limits in open quantum systems \cite{SLOS} and speed limits in non-Markovian systems \cite{SLNM}.
Specifically, there is much interest in the speed limit to quantum information processing (QIP) tasks; a range of perspectives can be found in \cite{ACAR, NFINQSL, MLB, OCQSL, QCOCU}.
There is also recent interest in applications of geometry to time optimal quantum control \cite{me1, me2, Bro1, Bro2, Bro3, ACAR, rage, GGQSL}.

Here we study the optimal time to implement an arbitrary (special unitary) quantum gate in a system with constrained time-dependent Hamiltonian using a generalized version of the geometric technique used in \cite{me1, me2, Bro1, Bro2, Bro3}.
As well as being important in its own right, this problem serves as a test-bed for applying the same geometric technique to assessing the limits of physical models of computation more broadly.

In computer science, it is typical to assess an algorithm by understanding its space and time requirements.
We present a methodology for mathematically relating a physical resource limitation to the optimal time to implement a computation under that limitation in a large class of physical models of computation.
In the quantum case, in the absence of a constraint on the Hamiltonian, there would be no speed limit to the implementation of any quantum gate.
Here we study a general type of constraint.
A suitably general, time-independent constraint on a time-dependent Hamiltonian for an $N$-level quantum system (such as those typical in quantum control and quantum computation) can be represented as a function 
$F:\mathfrak{su}(N) \rightarrow \mathbb{R}$,
where the allowed set of Hamiltonians is represented by imposing $F(-i\hat{H}_t) = 1$ for all time.
\changed{(Here $\mathfrak{su}(N)$ is the special unitary Lie algebra, the tangent space at the identity $T_{I}SU(N)$ on the group $SU(N)$ of special unitary matrices.)}
That is, only Hamiltonians which are on the unit level set of $F$ are admissible.
This class of constraints contains many physically familiar ones, including $|| H_t || = 1$ for any matrix norm $|| \cdot ||$.
If $F$ were to represent a time \emph{dependent} constraint, then the function $F$ would itself have to be time dependent.
This scenario is interesting as it represents the case of a resource upon which the limitation varies in time; investigating this idea further will form the basis of further work but will not be explored further herein.

This kind of resource constraint is an alternative and \changed{complementary} to the `forbidden directions' type constraints studied in \cite{me2}.
Informally, the level set of a smooth function $F$ on $\mathfrak{su}(n)$ is generally a smooth manifold (perhaps with singularities) of dimension one less than the \changed{dimension} of $\mathfrak{su}(n)$ (i.e. codimension 1).
As such, it is not possible to represent the \changed{constraint} that the Hamiltonian can only be drawn from a vector space of traceless Hermitian matrices which is of dimension less than the maximal one ($n^2-1$ for $n$ levels).

One class of constraints that is particularly amenable to geometric analysis is those that are degree $1$ positive homogeneous (PH), that is, $F(\lambda A) = \lambda F(A), \ \ \forall \lambda > 0$.
One notes that, a PH function is not necessarily a positive {\it function}; the positivity refers to $\lambda$, not $F$.
All norms are \emph{absolutely} homogeneous by definition.
As all absolutely homogeneous functions are also PH, the results derived below for PH functions apply to all matrix norms.

PH functions have a favorable property that is exploited throughout their use in geometry:
the action \changed{function of a given curve} corresponding to a PH (point-wise on each tangent space of a manifold) function does not depend on the parametrization of that curve.
This class of PH constraints is very large, and contains many of those standardly studied and of those that arise naturally in practical contexts.

\section{Right Invariant Actions and Evolution Times}

Given a PH function on $\mathfrak{su}(N)$ one can define a right invariant action functional on the group $SU(N)$ by right extension \cite{bump}.
Let $F$ be a PH function on $\mathfrak{su}(N)$ and define its right extension, also known as the canonical lift of right translation from $SU(N)$ to $TSU(N)$,
$F_{\hat{U}}: T_{\hat{U}}SU(N) \rightarrow \mathbb{R}$ by: $F_{\hat{U}}(\hat{A}) := F(\hat{A} \hat{U}^{\dagger})$.
Given any right invariant PH function on $TSU(N)$ (the tangent bundle to the group) and assuming that $\hat{U}_t$ solves the Schr{\"o}dinger equation: $\frac{d \hat{U}_t}{dt} = -i\hat{H}_t \hat{U}_t$, it is possible to define an action functional $S$ for curves $\hat{U}_t$ on $SU(N)$:
\begin{align}
\label{act}
 S[\hat{U}_t] = \int_{0}^{T} F_{\hat{U}_t} \left(\frac{d \hat{U}_t}{dt} \right) dt =
 \int_{0}^{T} F_{U_t} \left(-i\hat{H}_t \hat{U}_t \right) dt = 
 \int_{0}^{T} F \left(-i\hat{H}_t \right) dt
\end{align}
This action functional is itself right invariant (\changed{$S[U_t V] = S[U_t], \ \ \forall V \in SU(n)$}), which is the crucial property exploited in this derivation.
If the Hamiltonian is constrained such that $F(-i\hat{H}_t) = \kappa$ (a constant) throughout an evolution then $S[\hat{U}_t] = \kappa T$ where $T$ is the time interval of the evolution.
As such, the time optimal trajectories of a system constrained as above will in general be the \changed{action} minimizing curves on $SU(n)$ of such functionals emanating from the group identity.

\section{Action of a Time Independent Trajectory with a Given Endpoint}

Given any PH function $F: \mathfrak{su}(N) \rightarrow \mathbb{R}$, then any time independent, finite dimensional quantum system with Hamiltonian $F(-i\hat{H}) = \kappa$ such that $\hat{U}_T = \hat{O}$ (for some desired gate $\hat{O}$) satisfies:
\begin{align}
\label{tiqsl}
 T = \frac{1}{\kappa} F\left(\log(\hat{O}) \right)
\end{align}
This can be obtained straightforwardly from the exponential form of the time-independent time evolution operator and by taking matrix logs of both sides of $\hat{U}_T = \hat{O}$ and applying $F$.

\section{Examples and Known Results}

There are two well-known limits to the minimum time for a quantum state to transition to an orthogonal state in terms of $\bar{E}$, the system's energy expectation \cite{MLB} and the energy uncertainty $\Delta E$ in the energy uncertainty relation \cite{MT}.
For a discussion of the subtleties of the time-energy uncertainty relation and the varied use of the term see \cite{PB}.
In what follows, $E_0$ refers to the ground state energy of the Hamiltonian.

The task of orthogonalizing any state has its time optimal trajectory residing entirely within a single two (complex) dimensional subspace of state space \cite{Brody}.
As such, it is sufficient to study effectively two level systems when considering the speed limit for this particular process.
In any finite dimensional system we can produce exact orthogonality times by setting:
\begin{align}
\label{swap}
\hat{L} & = 
e^{i \pi/2} \begin{pmatrix}
0 & e^{-i \theta} \\
e^{i \theta} & 0
\end{pmatrix} \\ \nonumber
\hat{O} & = \hat{L} \oplus \hat{I}
\end{align}
as this gate maps $|0\rangle$ to $|1\rangle$ (up to a phase) when $\theta = \pi$.
Conjugating this gate by a unitary matrix $\hat{V}$ results in another gate $\hat{V} \hat{O} \hat{V}^{\dagger}$ which maps $\hat{V}|0\rangle$ to $\hat{V}|1\rangle$ similarly.
One readily checks that $\log(\hat{O}) = \frac{\pi}{2} \hat{L} \oplus \hat{Z}$ where the logarithm chosen is the principal logarithm and $\hat{Z}$ is the zero matrix of the appropriate size.

\changed{In the following examples, all functions named as some labeled $G$, are special cases of the fully general PH function $F$ used throughout this work.}
The PH function (with $p>0$):
\begin{align}
 G^{(|\psi \ra)}_{p}(-i\hat{H}) = \frac{\left(\la \psi | (\hat{H} - E_0 \hat{I})^p | \psi \ra\right)^{1/p}}{\la \psi | \psi \ra}
\end{align}
yields the known results \cite{Zych} generalizing the Margolus--Levitin theorem when applied to eqn(\ref{tiqsl}), gate $\hat{V}\hat{O}\hat{V}^{\dagger}$ and the state $\hat{V} |0\ra$ which $\hat{V}\hat{O}\hat{V}^{\dagger}$ orthogonalizes.
After some tedious algebra, this results in:
\begin{align}\label{eqn:ML}
T = & \frac{1}{\kappa} G^{(\hat{V}| 0 \ra)}_{p} \left(\log \left(\hat{V} \left(\hat{O} \oplus \hat{Z} \right) \hat{V}^{\dagger} \right) \right) \\ \nonumber
  = & \frac{\pi}{2 \kappa} G^{(\hat{V}| 0 \ra)}_{p} \left(\hat{V} \left(\log(\hat{O}) \oplus \hat{Z}\right) \hat{V}^{\dagger}\right) \\ \nonumber
= & \frac{\pi}{2^{1/p}\, \kappa} 
= \frac{\pi}{2^{1/p} ({\large\strut}\bar{E} - E_0)}
\end{align}
which is exactly the (saturated) bound of \cite{Zych} and the Margolus--Levitin bound for $p=1$.

The PH function
\begin{align}
 G^{(|\psi \ra)}(-i\hat{H}) = \frac{\left(\la \psi | (\hat{H} - \bar{E} \hat{I})^2 | \psi \ra\right)^{1/2}}{\la \psi | \psi \ra}
\end{align}
yields, by the same method, the time that saturates the Mandelstam--Tamm inequality \cite{MT}:
\begin{align}\label{eqn:MT}
    T = \frac{\pi}{2 \Delta E}
\end{align}

So both these standard results are special cases of our approach, derived from a specific choice of PH function.
We now consider further functions and derive new results in this family.

Consider the operator norm $||\cdot||_{\text{op}}$, which has previously been employed to analyze the quantum speed limit (QSL), in the context of open systems described by a Lindblad operator \cite{Lutz}, and more generally \cite{DL}.
This norm is equal to the largest singular value of a matrix.
The PH function:
\begin{align}
 G_{\text{op}}(-i\hat{H}) = ||\hat{H} - E_0 \hat{I} ||_{\text{op}} = E_{\text{max}} - E_0
\end{align}
leads to the time:
\begin{align}\label{eqn:norm}
 T = \frac{\pi}{E_{max} - E_0}
\end{align}
The common factor of $\pi$ in the times given in equations~(\ref{eqn:ML}), (\ref{eqn:MT}) and (\ref{eqn:norm}) arises from the matrix logarithm of $\hat{O}$.
By applying an arbitrary PH function to the orthogonalizing gate we find:
\begin{align}
    T_{\text{opt}} = \min_{V \in SU(n)} \frac{\pi F\left( \log( \hat{V} \hat{O}\hat{V}^{\dagger})\right) }{2 \kappa}
\end{align}
In the case that $F$ is $Ad$ invariant \changed{(see thm.\ref{adinv} below for a definition)} this yields:
\begin{align}
    T = \frac{\pi F\left( \log(\hat{O})\right)}{2 \kappa}
\end{align}


In \changed{general}, a gate $\hat{O}$ has more than one matrix logarithm \cite{matlog}. 
In order to obtain the physically optimal time to implement a gate, one must minimize over all logarithms.
However, if $F$ is a monotonically increasing function of the modulus of the eigenvalues of $\log(\hat{O})$, this minimum is always achieved by the principal logarithm.
This case includes all unitarily invariant matrix norms.
\cite{ramm} applies the singular values of the Hamiltonian to a QSL problem to compare both Hermitian and non-Hermitian quantum mechanics \changed{\cite{nonhamqm} (where the speed limit is shown to behave differently \cite{fasternhqm})}.

\section{Constraints for which Time-Independent Trajectories are Optimal for all Gates}\label{sec:time-indep-opt}

The method so far does not involve any optimization (other than over the multivalued the matrix logarithm); it gives a formula for \emph{the} time to implement a gate using a specific trajectory.

As shown in \cite{Brody}, in the case of the Margolis--Levitin theorem, ultimately such an analysis cannot yield such a strong result if only the time-independent case is considered,
because the Mandelstam--Tamm inequality is the `sharpest' possible speed limit for the process of orthogonalizing a state.
Nevertheless, other speed limit formulas still have physical relevance, even in the time-independent case.
Consider an experimenter who knows only the energy expectation $\bar{E}$, or more generally $F(-i\hat{H})$, and does not know the energy uncertainty: to them a bound in terms of other physical quantities is of interest.
As such, PH functions can be considered as representing a type of resource for time optimally implementing a quantum process, and eqn(\ref{tiqsl}) can be read as the time to implement a gate $\hat{O}$ when only $\kappa$ amount of resource $F$ is available.

The action of a curve $S[\hat{U}_t]$ is equal to the time to traverse that curve, so in a system with $F(-i\hat{H}_t)=1$ for all $t$, finding the critical curves (with fixed end points $\hat{I}$ and $\hat{O}$) of such an action is a method for finding the time optimal trajectories for a gate $\hat{O}$. This is a generalization of the Finsler geodesic based method of \cite{me1,me2} for finding optimal times in the presence of constraints.
This method applies to any curve, and thus to systems with time-dependent constrained Hamiltonians.
Here we give a condition for the time-independent trajectory to be the optimal one.

The result relies on a theorem about \emph{Relative Equilibria} \cite{Cram} (also referred to as \emph{geodesic vectors} when the Lagrangian is given by a Riemannian/Finsler metric).
A relative equilibrium is a curve which is simultaneously a stationary curve of a Lagrangian and a one parameter subgroup.
In the case of quantum mechanics and $SU(N)$, one parameter subgroups (of the form $e^{it\hat{A}}$ for some Hermitian operator $\hat{A}$) are  the time-independent trajectories.
The following theorem shows when all the optimal trajectories for a constrained 
system are given by the time-independent trajectories.
\begin{theorem}
The time optimal trajectories to implement any given arbitrary gate $\hat{O} \in SU(N)$ in a quantum system constrained such that $F(-i\hat{H}_t)= \kappa$ (for all time) are the trajectories achieved using a time independent Hamiltonian if and only if $F$ is $Ad$ invariant,
that is, when $F(-i\hat{H}) = F(\hat{V}(-i\hat{H})\hat{V}^{\dagger})$ for all $\hat{V} \in SU(N)$.
\label{adinv}
\end{theorem}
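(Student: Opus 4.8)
The plan is to recognise the optimisation as a geodesic problem for the right-invariant Finsler structure of eqn(\ref{act}) and to reduce the statement ``time-independent $=$ optimal'' to the Euler--Poincar\'e characterisation of relative equilibria \cite{Cram}. Throughout I would identify $\mathfrak{su}(N)$ with its dual using the $Ad$-invariant trace form $\la A,B\ra = -\Tr(AB)$, which is positive definite on anti-Hermitian matrices, so that the functional derivative (Legendre transform) $\mu_\xi := \delta F/\delta\xi = dF_\xi$ of the reduced Lagrangian $F$ is itself an element of $\mathfrak{su}(N)$. Here $\xi = \dot{\hat U}\hat U^{\dagger} = -i\hat H$ is the right-trivialised velocity, which is constant precisely for the time-independent (one-parameter subgroup) trajectories $e^{t\xi}$.

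First I would set up the relative-equilibrium condition. For the right-invariant action $S$ a stationary curve satisfies the Euler--Poincar\'e equation $\dot\mu_\xi = \mathrm{ad}^*_\xi\mu_\xi$; along a one-parameter subgroup $\xi$ is constant, so $\dot\mu_\xi = 0$ and the curve $e^{t\xi}$ is stationary (a relative equilibrium) if and only if $\mathrm{ad}^*_\xi\mu_\xi = 0$, independently of the sign convention in the reduced equation. The key algebraic step is then to rewrite this using the invariant pairing: for every $\eta\in\mathfrak{su}(N)$, $\la\mathrm{ad}^*_\xi\mu_\xi,\eta\ra = \la\mu_\xi,[\xi,\eta]\ra = -\,dF_\xi([\eta,\xi])$. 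Hence $\mathrm{ad}^*_\xi\mu_\xi = 0$ is exactly the vanishing of the directional derivative of $F$ along the tangent space $\{[\eta,\xi]:\eta\in\mathfrak{su}(N)\}$ of the adjoint orbit through $\xi$; that is, the relative-equilibrium condition at $\xi$ coincides with the statement that $F$ is \emph{infinitesimally} $Ad$-invariant at $\xi$.

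For sufficiency ($\Leftarrow$) I would first verify directly that $Ad$-invariance promotes the right extension to a bi-invariant one: $F_{\hat V\hat U}(\hat V\hat A) = F(\hat V\hat A\hat U^{\dagger}\hat V^{\dagger}) = F(\hat A\hat U^{\dagger}) = F_{\hat U}(\hat A)$. Bi-invariance forces $\mathrm{ad}^*_\xi\mu_\xi = 0$ for every $\xi$, so every one-parameter subgroup is a geodesic; since $SU(N)$ is compact a minimiser from $\hat I$ to any $\hat O$ exists, and being a geodesic emanating from the identity it is such a one-parameter subgroup, recovering eqn(\ref{tiqsl}). For necessity ($\Rightarrow$) I would run this backwards: if the optimal trajectory to every gate is time-independent, then for $\hat O = e^{\tau\xi}$ with $\tau$ small the (locally unique, minimising) geodesic is $e^{t\xi}$, so $e^{t\xi}$ is a relative equilibrium and $\mathrm{ad}^*_\xi\mu_\xi = 0$; letting $\hat O$ range over a neighbourhood of $\hat I$ and rescaling by positive homogeneity of $F$, this holds for all $\xi\in\mathfrak{su}(N)$. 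By the key step $F$ is then infinitesimally $Ad$-invariant everywhere, i.e.\ constant on every adjoint orbit, and connectedness of $SU(N)$ integrates this to $F(\hat V\xi\hat V^{\dagger}) = F(\xi)$ for all $\hat V$, which is full $Ad$-invariance.

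I expect the main obstacle to lie in the necessity direction, specifically in justifying that ``all optimal trajectories are time-independent'' genuinely forces \emph{every} one-parameter subgroup to be a relative equilibrium: this needs local existence and uniqueness of the minimising geodesic between nearby points, together with the fact that the admissible velocities $\xi$ sweep out all directions, both of which require mild regularity and convexity hypotheses on $F$ (a general PH function need not be smooth or strongly convex, unlike a genuine Finsler norm). A secondary, more routine obstacle is the degree-$1$ homogeneity of $F$: because the action is reparametrisation-invariant the Legendre transform degenerates along $\xi$, so I would either fix the constant-speed gauge $F(-i\hat H_t) = \kappa$ or pass to the quadratic energy $\tfrac12 F^2$, checking in either case that the relative-equilibrium condition $\mathrm{ad}^*_\xi\mu_\xi = 0$ is unchanged.
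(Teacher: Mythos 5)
Your proposal follows the same underlying route as the paper --- the theorem is reduced to a statement about relative equilibria of an invariant Lagrangian on a compact connected Lie group, with the pivotal observation in both cases being that $Ad$-invariance of $F$ on $\mathfrak{su}(N)$ is equivalent to bi-invariance of its right extension to $TSU(N)$ --- but you execute it quite differently. The paper's proof is essentially a citation: it invokes the theorem of \cite[\S 6.1]{Cram} that bi-invariant Lagrangians on compact connected Lie groups have the one-parameter subgroups as their critical curves, and disposes of the converse with the single remark that it ``can be easily obtained by an application of the maximum principle.'' You instead open the black box: you derive the relative-equilibrium condition $\mathrm{ad}^*_\xi\mu_\xi=0$ from the Euler--Poincar\'e equation, identify it with infinitesimal $Ad$-invariance of $F$ at $\xi$ via the trace pairing, and then integrate over adjoint orbits using connectedness --- which gives you a genuine, self-contained argument for the necessity direction that the paper only gestures at. What your version buys is completeness and honesty about hypotheses: you correctly flag that necessity needs local existence and uniqueness of minimizing geodesics (hence smoothness and strong convexity of $F$, which a general PH function need not have), that the degree-$1$ homogeneity degenerates the Legendre transform, and, implicitly, that both proofs must still bridge the gap between ``critical curve'' and ``time-optimal trajectory'' --- none of which the paper's citation-based proof addresses. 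The cost is length; the paper's proof is two sentences. Neither version fully closes the regularity gaps you identify, but yours at least names them.
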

\begin{proof}
We apply a result of \cite[\S 6.1]{Cram}.
The relevant theorem states that any bi-invariant Lagrangian on a compact, connected Lie group has the one-parameter subgroups as its critical curves.
\changed{Here, bi-invariant means, invariant under left, and right multiplication by any two \emph{potentially different} group elements \cite{Alexandrino2015}.
This is in contract to $Ad$ invariance, which only requires invariance under conjugation by a single group element.}
As in eqn.(\ref{act}), set the Lagrangian to be the right extension of $F$:
\begin{align}
\mathcal{L}_{U}(\hat{A}\hat{U}) = F(\hat{A}).
\end{align}
Applying the referenced theorem to this Lagrangian yields the result.
We also note that the forward direction of the proof can be easily obtained by an application of the maximum principle.
\end{proof}
$Ad$-invariant norms on $\mathfrak{su}(N)$ are in one-to-one correspondence with bi-invariant Finsler metrics on $SU(N)$ \cite{deng}.
So it follows that \emph{the} time required to implement a gate with a time-independent system is equal to the optimum time limit (over all time-dependent Hamiltonians) in exactly the cases when the PH function representing the constraint is $Ad$ invariant.
This result generalizes the result of \cite{me2} from constraints represented by a Finsler metric, to an arbitrary PH function on $\mathfrak{su}(N)$.

The above mentioned example PH functions\changed{, in eqs. (4) and (6),} are \emph{not} bi-invariant, except the operator norm.
Some interesting examples of bi-invariant $F$ are the unitarily invariant norms \cite{KYF}.

\section{Gates for which Time Independent Trajectories are Optimal for a Given Constraint}

Where $F$ is not bi-invariant there may still be gates for which the optimal trajectory is achieved with a time-independent Hamiltonian.
The action of eqn.(\ref{act}) associated with such an $F$ is still right invariant by construction.
In this case the associated action is a right invariant Finsler metric on $SU(N)$ and the optimal trajectory is a geodesic \cite{me2}.
It is possible to write a criterion for a gate $\hat{O}$ to be implemented in the case that $F$ is a norm
\cite{HOMGEO, HOMEGEO2}.
By applying \cite[thm 3.1]{HOMGEO} in the special case that the state space is the Lie group (rather than a more general homogeneous space),
the condition for $\hat{X}$ to be a geodesic vector becomes:
\begin{align}
\label{yoyo}
 g_{\hat{X}} (\hat{X} , [\hat{X}, \hat{Z}] ) = 0,\ \ \forall \hat{Z} \in \mathfrak{su}(N)
\end{align}
where $g$ is the Hessian (fundamental tensor \cite{ran} restricted to $SU(N)$ in this case) of $F$.
For a specific quantum gate $\hat{O}$, a condition for it to have a time-independent trajectory as its time-optimal trajectory is:
\begin{align}
 \label{thing}
 g_{\log(\hat{O})} (\log(\hat{O}) , [\log(\hat{O}), \hat{Y}] ) = 0, \ \ \forall \hat{Y} \in \mathfrak{su}(N)
\end{align}

One important case where $g$ can be written explicitly is the example of $F$ being a Randers norm \cite{ran}.
For recent applications to quantum time optimal control see \cite{me1, me2, Bro1, Bro2, Bro3}.
Note that a Randers norm can never be bi-invariant, as there are no non-zero bi-invariant one-forms on $SU(N)$:
the adjoint representation of $SU(N)$ acts transitively on $\mathfrak{su}(N)$, any element of $\mathfrak{su}(N)$ can be sent to the kernel of the candidate one-form by the adjoint action of some $\hat{U} \in SU(N)$.
Thus any bi-invariant one form is zero and a bi-invariant Randers metric is Riemannian.

We have not found a way to analyze the case of a general PH function in this approach, as the Hessian is in general not easy to compute.
However, eqn.(\ref{thing}) could be used as numerical check that a certain gate and constrained systems has as time-optimal time-independent control.
Hence there is no need for any {\sc grape}-like \cite{grape} optimization of pulses in such cases. This can serve as a first check before beginning costly numerical optimization procedures.

\section{Novel Speed Limit Formulas From Old, An Algebra Of Speed Limiting Resources}

PH functions can be combined to obtain new such functions, and hence speed limits. 
For example, 
given any two PH functions $F_1,F_2$, then the following are also PH functions:
\begin{itemize}
\item $F_1+F_2$
\item $F(i\hat{H}) = \left(F_1(i\hat{H})^{p}F_2(i\hat{H}_t)^p \right)^{1/p}$ for any $p>0$
\item $F_{\text{max}}(-i\hat{H}) = \max\{F_1(i\hat{H}), F_2(i\hat{H})\}$
\item $F_{\text{min}}(-i\hat{H}) = \min\{F_1(i\hat{H}), F_2(i\hat{H})\}$
\end{itemize}
These last possibilities, of creating a new action functional from the max/min of two others, leads to bounds of the type discussed in \cite{UBT, QLDE}.
The `unified bound' described in \cite{UBT} can be obtained from the action arising from PH functions: the minimum of the energy expectation and the energy uncertainty.
This unified bound \cite{UBT} is subject to a similar criticism as that given in the discussion of the Margolis--Levitin theorem in \cite{Brody}; we make the same observation as given above (\S\ref{sec:time-indep-opt}) about which quantities are {\it a priori} known to experimenters.

The operations above that combine PH functions into new ones have an important property: combining two $Ad$ invariant PH functions yields a new one, which will also have the time-independent trajectories as its corresponding optimal trajectories.
Just as in \cite{me2}, when the optimal trajectory is not a time-independent one, it can be obtained (\changed{at least} numerically) as a solution to the Euler--Poincar\'e equation corresponding to $F$.

\section{Speed Limits in Resource-Limited Physical Models of Computation}

The case of quantum computation implemented via quantum control with a limited resource is a special case of a more general framework for describing computation in physical systems using control theory.
In the quantum scenario, a limitation on the Hamiltonian is described by the restriction $F(iH)=\kappa$.
In a more general setting, which includes many stochastic control scenarios and Hamiltonian mechanics, where the states of a physical system with some degree of external control are points on a manifold $M$, one can ask the informal question: `what is the least time needed to transition from a state $q_0 \in M$ to a state $q_1 \in M$?'
In this scenario, we associate the states of $M$, or subsets of states, with states of a computation in progress in the manner of the representation relation described in \cite{wdpc}.
With this perspective in mind, one sees that this methodology has application to assessing the minimum time required to implement a computation in a physical device.

The scenario of a speed limiting constraint in a controlled dynamical system can be visualized as in figure (\ref{slf}).
\begin{figure}[tp]
 \centering
 \includegraphics[width=0.7\textwidth]{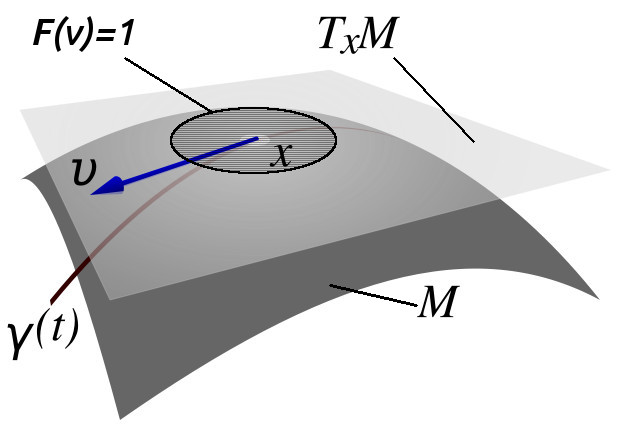}
 \label{slf}
  \caption{A speed limiting constraint in a dynamical system.
   $M$ is the state space of a dynamical system; $T_x M$ is the tangent space at $x$; $\gamma(t)$ is a trajectory; $v$ is the tangent vector to the trajectory at $x$; $F$ represents a speed limiting constraint.
  }
\end{figure}
In this scenario of a constrained navigation problem, the action of any curve $\gamma(t)$ on $M$ connecting $q_0 \in M$ to $q_1 \in M$ in time $T$, subject to the constraint $F_{\gamma(t)}\left(\frac{d \gamma(t)}{dt}\right) = \kappa, \forall t \in [0,T]$, is:
\begin{align}
L[\gamma(t)] & = \int_0^{T} F_{\gamma(t)}\left(\frac{d \gamma(t)}{dt}\right) dt = \kappa T
\end{align}
Provided that $F_{\gamma(t)}$ is a PH function on each tangent space $T_{\gamma(t)}M$, this action is invariant under all positive reparameterizations of the curve $\gamma$.
As such this type of action represents something intrinsically geometrical, and supports \changed{our} claim that physical speed limits in constrained systems, computational or otherwise, are geometrically intrinsic and should be studied using \changed{the framework of geometric time optimal control}.

This observation justifies the idea that physical speed limits to computation are in correspondence with PH functions on the tangent spaces to the state space of the underlying dynamical system enacting the computation.
This statement supports the intuition that \changed{an underlying geometric} structure similar to a distance measure on physical states should form the basis of calculating optimal times.
As such, we propose that such structures should form the basis for understanding the analogue of \changed{Turing machine time-complexity \cite{toc}} for computations embodied in physical substrates \changed{(modeled as dynamical systems), and further,} that the set of all $PH$ functions on a manifold is a good model for speed limiting resources in \changed{such} models of computation.

\section{Discussion and Conclusion}

We have presented a novel mathematical technique for representing a wide class of physical constraints on a quantum Hamiltonian for a finite dimensional system which includes both theoretically and practically interesting examples.

There are some physically meaningful constraints that cannot be represented this way, for example: $\bar{E} = \langle \psi_t|\hat{H}_t | \psi_t \rangle = \kappa$ or $(\Delta E)^2 =  \langle \psi | \left( \hat{H}_t - \bar{E}\right)^2 | \psi_t \rangle = \kappa$ through an evolution.
The states that appear in our constraints are time independent even when $\hat{H}_t$ is time dependent.
We expect that eqn.(\ref{yoyo}) will readily generalize to this situation and yield a procedure similar to that used to prove the Mandelstam--Tamm inequality in \cite{mese} (and many other places) for a wider class of physical constraints.

One possible obstacle to this approach is that a `finite form' of each $PH$ function may not be \changed{known}.
In the Mandelstam--Tamm inequality case, the geodesic distances on complex projective space are known exactly to be $\frac{\pi}{2}$ for orthogonal states.
However, the action of the curve of least action, using an arbitrary $PH$ function, connecting two states may not be known or be impossible to write in closed form in cases other than when the action is the Fubini--Study length of a curve.
The Fubini--Study metric is the only unitarily invariant metric on complex projective space, and is also the only (by an identical proof) unitarily invariant action arising from a $PH$ function on the same space.
This observation clarifies the exact sense in which the Mandelstam--Tamm inequality is the `sharpest' bound \cite{Brody} of the process of orthogonalizing a quantum state.
We have shown that the concepts presented generalize to far more general notions of physical computation modeled as controlled dynamical systems with a speed limiting constraint.

The method presented is a generalization of the representation of a constraint on a quantum Hamiltonian presented in \cite{me2}, which analysed the case that the constraint was represented by a Riemannian metric.
Given both:
\begin{itemize}
\item the scope of the relationship between physical speed limits in systems with an appropriate constraint
\item the broad recent interest in the physical limits to computation
\end{itemize}
we propose, for computations physically embodied in dynamical systems, that:
\begin{itemize}
\item PH functions model speed-limiting resources in physical computation
\item the associated actions of trajectories, yield optimal times for a system to traverse a specific trajectory
\item Constrained navigation problems in dynamical systems are the correct tool of assessing optimal times in physical computation and that such optimal times form the analogue of classical time complexity analysis \cite{toc}
\end{itemize}

\changed{It is also of practical interest, especially if the exact quantum gate required for a computational step is not exactly known, what the `slowest gate' is;
that is, the gate for which the minimum time is the largest given a specific constraint.
This time is discussed in detail without constraints in the work \cite{qclcl}.
Seeking the analogous time, for each constraint of the type discussed in this work would be of interest.}

\begin{table}[H]
\centering
\caption{Summary of results}
\label{tabsum}
\begin{tabular}{l|l|l}
$F$ property & $Ad$-invariant & not $Ad$-invariant \\ \hline
PH function &

\begin{tabular}[c]{@{}l@{}} Constant Hamiltonian\\ optimal for all gates \end{tabular}

& 

\begin{tabular}[c]{@{}l@{}} Constant Hamiltonian\\ optimal for gates solving (14) \end{tabular}

\\ \hline
A norm (all also PH) & \begin{tabular}[c]{@{}l@{}} Constant Hamiltonian\\ optimal for all gates \end{tabular}

& 

\begin{tabular}[c]{@{}l@{}} Constant Hamiltonian \\ optimal for gates solving (14) \end{tabular}
\end{tabular}
\end{table}
\changed{Table \ref{tabsum} summarizes the results on speed limits for quantum gates in systems with constraints represented by $PH$ functions on $\mathfrak{su}(N)$.
One clearly sees that the property of the constraint being represented by a norm, rather than by a more general $PH$ function, is not the key determining factor affecting the minimum time or the nature (constant or otherwise) of the time minimizing control scheme.
This indicates that the scope of investigation in to the geometry of the quantum speed limit, and speed limits for computation in more general classes of physical system should be expanded beyond the investigation of this somewhat limited class of constraints. 
We further note that we have not included the case of choosing from a limited set of constant Hamiltonians during a computation (i.e., a piecewise constant control), as is the case in the quantum circuit model.}

\section{Acknowledgments}

We would like to thank Norman Margolus, Raam Uzdin and Kazuyuki Fujii for several comments on and corrections to a previous version of this work.

\bibliographystyle{unsrt}
\bibliography{mybib}

\end{document}